\newtheorem{lemma}{Lemma}
\newcommand{\ohyes}{\ding{51}}
\newcommand{\ohno}{\ding{55}}
\newcommand{\ohpartial}{\ding{57}}
	\providecommand\BibTeX{{%
			\normalfont B\kern-0.5em{\scshape i\kern-0.25em b}\kern-0.8em\TeX}}}
\begin{document}

\title{Verifiable Fairness: Privacy--preserving Computation of Fairness for Machine Learning Systems}
\author[1]{Ehsan Toreini}
\author[2]{Maryam Mehrnezhad}
\author[3]{Aad Van Moorsel}
\affil[1]{University of Surrey, UK}
\affil[2]{Royal Holloway University of London}
\affil[3]{Birmingham University}
\date{}                     
\setcounter{Maxaffil}{0}
\renewcommand\Affilfont{\itshape\small}
\maketitle              
\begin{abstract}
 Fair machine learning is a thriving and vibrant research topic. In this paper, we propose Fairness as a Service (FaaS), a secure, verifiable and privacy-preserving protocol to computes and verify the fairness of any machine learning (ML) model. In the deisgn of FaaS, the data and outcomes are represented through cryptograms to ensure privacy. Also, zero knowledge proofs guarantee the well-formedness of the cryptograms and underlying data. FaaS is model--agnostic and can support various fairness metrics; hence, it can be used as a service to audit the fairness of any ML model. Our solution requires no trusted third party or private channels for the computation of the fairness metric. The security guarantees and commitments are implemented in a way that every step is securely transparent and verifiable from the start to the end of the process. The cryptograms of all input data are publicly available for everyone, e.g., auditors, social activists and experts, to verify the correctness of the process. We implemented FaaS to investigate performance and demonstrate the successful use of FaaS for a publicly available data set with thousands of entries.  

\end{abstract}

\section{Introduction}
\label{s:intro}
Demonstrating the fairness of algorithms is critical to the continued proliferation and acceptance of algorithmic decision making in general, and AI-based systems in particular. There is no shortage of examples that have diminished trust in algorithms because of unfair discrimination of groups within our population. This includes news stories about the human resource decision-making tools used by large companies, which turn out to discriminate against women \cite{that2018favored}. There also are well-understood seminal examples studied widely within the academic community, such as the unfair decisions related to recidivism in different ethnicities \cite{larson2016we}. In the UK, most recently the algorithm to determine A-levels substitute scores under COVID-19 was widely found to be unfair across demographics \cite{ALevel}. 

There has been a surge of research that aims to establish metrics that quantify the fairness of an algorithm.  This is an important area of research, and tens of different metrics have been proposed, from individual fairness to group fairness.  It has been shown that various expressions for fairness cannot be satisfied or optimised at once, thus establishing impossibility results~\cite{friedler2016possibility}. Moreover, even if one agrees about a metric, this metric on its own does not provide trust to people. It matters not only what the metrics express, but also who computes the metrics and whether one can verify these computations and possibly appeal against them.  At the same time, in situations in which verification by stakeholders is possible, the owner of the data wants to be assured that none of the original, typically sensitive and personal, data is leaked. The system that runs the algorithms (later referred to as Machine Learning system or ML system) may have a valid interest in maintaining the secrecy of the model.   
In other words, if one wants to establish {\em verifiable fairness}, one needs to tackle a number of security, privacy and trust concerns.  

In FaaS, we take a fundamentally different design approach. We leak no data or model information, but the FaaS is still able to calculate fairness for a variety of fairness metrics and independent of the ML model. Thus, replacing the model in the ML system will not impact functionality of FaaS protocol. Moreover, any other party can verify this calculation since all the necessary encrypted information is posted publicly, on a `fairness board'.  

Summarising, our contributions are: 
\begin{itemize}
\item We propose FaaS, a model--agnostic protocol to compute different fairness metrics without accessing sensitive information about the model and the dataset.


\item FaaS is universally verifiable so everyone can verify the well--formedness of the cryptograms and the steps of the protocol.

\item We implement a proof-of-concept of the FaaS architecture and protocol using off-the-shelf hardware, software, and datasets and run experiments to demonstrate the practical feasibility of FaaS.
\end{itemize}

\section{Background and Related Work}
\label{s:related}
One of the benefits of auditing ML-based products relates to {trust}. Trust and trustworthiness~(in socio-technical terms) are complicated matters. 
Toreini et. al~\cite{toreini2020relationship} proposed a framework for trustworthiness technologies in AI--solutions based on existing social frameworks on trust~(i.e. demonstration of Ability, Benevolence and Integrity, a.k.a. ABI and ABI+ frameworks) and technological trustworthiness~\cite{siau2018building}. They comprehensively reviewed the policy documents on regulating AI and the existing technical literature and derived any ML--based solution needs to demonstrate fairness, explainability, auditability, and safety and security to establish social trust. 
When using AI solutions, one cannot be assured of the fairness of such systems without trusting the reputation of the technology provider (e.g., datasets and ML models). It is commonly believed that leading tech companies do not make mistake in their implementation~\cite{e2020pathways}; however, in practice, we often witness that such products indeed suffer from bias in ML~\cite{that2018favored,ALevel}.

\begin{table}[t]
\centering
\caption{Features of FaaS and comparison with other privacy--oriented fair ML proposals~(support: full: \ohyes, partial: \ohpartial, none: \ohno)}
\resizebox{0.8\textwidth}{!}{%
\begin{tabular}{@{}l|c|c|c|c|c|c@{}}
\toprule
 Work       & \begin{tabular}[c]{@{}c@{}}Universal\\ Verifiability\end{tabular} & \begin{tabular}[c]{@{}c@{}}Ind. of \\ metric\end{tabular} & \begin{tabular}[c]{@{}c@{}}Ind. of\\ ML model\end{tabular} & \begin{tabular}[c]{@{}c@{}}User \\ Privacy\end{tabular} & \begin{tabular}[c]{@{}c@{}}Model\\ Confidentiality\end{tabular} & \begin{tabular}[c]{@{}c@{}}Off-the--shelf\\ Hardware\end{tabular}\\ \midrule
Veal \& Binns~\cite{veale2017fairer} &\ohno&\ohno&\ohno&\ohno&\ohno&\ohyes \\
Kilbertus et al.~\cite{kilbertus2018blind}   &\ohpartial& \ohno & \ohno&\ohyes&\ohyes&\ohyes \\
Jagielski et al.~\cite{jagielski2019differentially} & \ohno & \ohno & \ohno & \ohyes & \ohno&\ohyes \\
Hu et al.~\cite{hu2019distributed}   & \ohno & \ohno & \ohno & \ohyes & \ohno&\ohyes \\
Segal et al.~\cite{segal2021fairness}   & \ohpartial & \ohyes & \ohyes & \ohyes & \ohyes&\ohyes \\
Park et al.~\cite{park2022fairness}   & \ohpartial & \ohyes & \ohyes & \ohyes & \ohyes &\ohno\\
\midrule
FaaS (this paper)   &\ohyes&\ohyes&\ohyes&\ohyes&\ohyes&\ohyes\\
\bottomrule
\end{tabular}
\label{tbl:comparison}
}
\end{table}

\subsection{Fairness Metrics}
\label{ss:fairness}
There exist several fairness definitions in the literature. Designing a fair algorithm requires measuring and assessment of fairness. Researchers have worked on formalising fairness for a long time. Narayanan~\cite{narayanan2018translation} lists at least 21 different fairness definitions in the literature and this number is growing, e.g., \cite{chouldechova2017fair,corbett2017algorithmic}. 

Fairness is typically expressed as discrimination in relation to data features.  These features for which discrimination may happen are known as \emph{Protected Attributes} (PAs) or sensitive attributes. These include, but are not limited to, ethnicity, gender, age, scholarity, nationality, religion and socio-economic group. 

The majority of fairness definitions expresses fairness in relation to PAs.  
In this paper, we consider Group Fairness, which refers to a family of definitions, all of which consider the performance of a model on the population groups level. The fairness definitions in this group are focused on keeping decisions consistent across groups and are relevant to both disparate treatment and disparate impact notions, as defined in~\cite{demographic,equalizedOdds}.

For the following definitions, let $U$ be an individual in the dataset, where each individual has data features $(X,A)$. In this context, $A$ denotes the PA and in what follows $A=1$ and $A=0$ express membership of a protected group or not.  $X$ constitutes the rest of attributes that are available to the algorithm. $Y$ denotes the actual label of $U$ while $\hat{Y}$ would be the predicted label by the model: (1) {Demographic Parity (DP)} A classifier satisfies DP when outcomes are equal across groups$F_{DP} = \frac{Pr \left( \hat{Y} = 1 \mid A = 0 \right)}{Pr \left( \hat{Y} = 1 \mid A = 1 \right)}$
(2) \emph{Equalised Odds (EOd)} A classifier satisfies EO if equality of outcomes happens across both groups and true labels: $F_{EOd} = \frac{Pr \left( \hat{Y} = 1 \mid A = 0, Y = \gamma \right)}{Pr \left( \hat{Y} = 1 \mid A = 1, Y = \gamma \right)}$ where $\gamma \in \left\{  0, 1 \right\}$.
(3){Equality of Opportunity (EOp)} is similar to EO, but only requires equal outcomes across subgroups for \emph{true positives}:$F_{EOp} = \frac{Pr \left( \hat{Y} = 1 \mid A = 0, Y = 1 \right)}{Pr \left( \hat{Y} = 1 \mid A = 1, Y = 1 \right)}$

In this paper, we will focus on the computations based on the above three fairness metrics. For this computation, the auditor requires to have access to the three pieces of information for each elements in the dataset: (1) the sensitive group membership (binary value for $A$ demonstrating if a sample does or does not belong to a group with PAs) (2) the actual labelling of the sample (binary value for $Y$) (3) the predicted label of the sample (binary value for $\hat{Y}$). The ML system transfers this information for each sample from their test set. Then, the auditor uses this information to compute the above fairness metrics. 

Note that while we consider the above metrics for our protocol and proof-of-concept implementation in next sections, our core architecture is independent of metrics, and the metric set can be replaced by other metrics too (Fig. \ref{fig:system}).


\subsection{Auditing ML Models for Fairness}
The existing research in fair ML normally assumes the computation of the fairness metric to be done locally by the ML system, with full access to the data, including the private attributes~\cite{equalizedOdds,corbett2017algorithmic,chouldechova2017fair}. However, there is a lack of verifiability and independence in these approaches which will not necessarily lead to trustworthiness.
To increase trust in the ML products, the providers might make the trained model self--explaining~(aka transparent or explainable). There is also the {transparent--by--design} approach \cite{guidotti2018survey,angelino2017learning,wang2017bayesian}. While this approach has its benefits, it is both model--specific and scenario--specific~\cite{panigutti2021fairlens}; thus it cannot be generalised. 
There is also no trusted authority to verify such claims and explanations. 
Moreover, in reality, the trained model, datasets and feature extraction mechanisms are company assets. Once exposed, it can make them vulnerable to the competitors. 
Another approach to provide transparency to the fairness implementation comes through the {black--box auditing}, also known as adhoc ~\cite{guidotti2018survey,lundberg2017unified,panigutti2020doctor}. In this way, the model is trained and audited for different purposes~\cite{adler2018auditing}. This solution is similar to tax auditing and financial ledgers where accountants verify and ensure these calculations are legitimate.
However, unlike the well--established body of certifications and qualifications for accountants in tax auditing and financial ledgers; there does not exist any established processes and resources for fairness computation in AI and ML. 

The concept of a service that calculates fairness has been proposed before, e.g., in \cite{veale2017fairer}. 
The authors introduced an architecture to delegate the computation of fairness to a trusted third party that acts as a guarantor of its algorithmic fairness. In this model, the fairness service is trusted both by the ML system and the other stakeholders (e.g. users and activists). In particular, the ML system must trust the service to maintain the privacy of data and secrecy of its model, whilst revealing to the trusted third party the algorithm outcome, sensitive input data and even inner parameters of the model. 
This is a big assumption to trust that the third party would not misuse the information and hence the leakage of data and model information is not a threat. 

To address these limitations, Kilbertdus et al.~\cite{kilbertus2018blind} proposed a system known as `blind justice', which utilises multi--party computation protocols to enforce fairness into the ML model. Their proposal considers three groups of participants: User~(data owner), Model~(ML model owner) and the Regulator~(that enforces a fairness metric). These three groups collaborate with each in order to train a fair ML model using a federated learning approach~\cite{yang2019federated}. The outcome is a fair model that is trained with the participation of these three groups in a privacy-preserving way. 
They only provide a limited degree of verifiability in which the trained model is cryptographically certified after training and each of the participants can make sure if the algorithm has not been modified. It should be noted that since they operate in the training stage of the ML pipeline, their approach is highly dependent on the implementation details of the ML model itself. 
Jagielski et al.~\cite{jagielski2019differentially} proposed a differential privacy approach in order to train a fair model. Similarly, Hu et al.~\cite{hu2019distributed} used a distributed approach to fair learning with only demographic information. Segal et al.~\cite{segal2021fairness} used similar cryptographic primitives but took a more holistic approach towards the computation and verification of fairness. They proposed a data-centric approach in which the verifier challenges a trained model via an encrypted and digitally certified dataset using merkle tree and other cryptographic primitives. Furthermore, the regulator will certify the model is fair based on the data received from the clients and a set of dataset provided to the model. 
Their approach does not provide universal verifiability as the regulator is the only party involved in the computation of fairness. 
More recently, Park et al.~\cite{park2022fairness} proposed a Trusted Execution Environment~(TEE) for the secure computation of fairness. Their proposal requires special hardware components which are cryptographically secure and provide enough guarantees and verification for the correct execution of the code.

The previous research 
generally has integrated fairness into their ML algorithms; therefore, such algorithms should be redesigned to use another fairness metric set. 
As it can be seen in Table \ref{tbl:comparison}, FaaS is the only work which is independent of the ML model and fairness metric with universal verifiability, and hence, can be used as a service. 



\begin{figure}[t]
	\centering
	\includegraphics[scale=0.64]{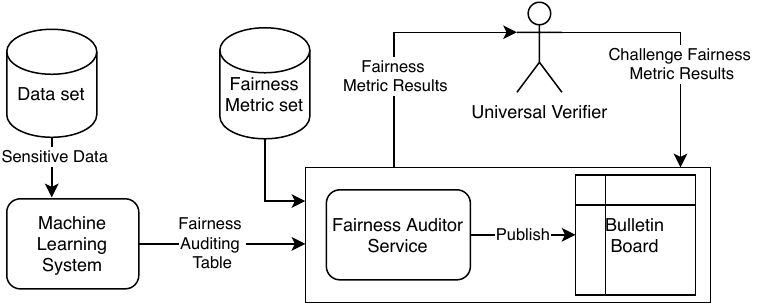}
	\caption{FaaS Architecture}
	\label{fig:system}
\end{figure}

\section{FaaS Architecture}
\label{s:FaaS}
In this Section, we present the architecture of our system (Fig. \ref{fig:system}) and describe its features. The FaaS architecture includes stakeholders in three roles: A) {\bf ML System:} a system that owns the data and the ML algorithm, B) {\bf Fairness Auditor Service:} a service that computes the fair performance of the ML system, and C) {\bf Universal Verifier:} anyone who has the technical expertise and motivation to verify the auditing process.

\subsection{Threat Model}
\label{ss:threat}
The design and implementation of the security of parties implementing the respective protocol roles (ML system, Fairness Auditor Service, and Universal Verifier) (Fig. \ref{fig:system}) are independent of each other. The inter--communications that happen between the roles assumes no trust between parties; thus, all their claims must be accompanied with validation proofs~(for which we will use ZKP). We assume the Auditor System is vulnerable to different attacks and not trustworthy. Thus, the data stored on the Fairness Auditor System must be encrypted, tamper-proof and verifiable at all stages. Moreover, we assume the communication channel between the ML system and fairness auditor is not protected. Therefore, the sensitive data must be encrypted before the transmission starts.  However, there will be an agreement on the cryptographic primitives at the pre--setting stage in the protocol sequence.

In FaaS, we assume that the ML system is honest in sending the cryptograms of the original labels of the dataset samples. One might argue against such assumption and discuss that the ML system might intend to deceive the Auditor Service, and by extension the verifiers, by modifying the actual labels of the dataset. For instance, the ML system would provide the cryptograms of the actual labels and the predicted ones as similar to each other as possible so that the auditor concludes the algorithms are fair. This is an interesting area for further research.  For instance, it may be addressed by providing the cryptograms of the actual labels to the Auditor Service independently e.g. the verifier may own a dataset it provides to a ML system. The verifier then separately decides the desired values for the actual labels and feeds these to the Auditor service.  In this way, it is far less clear to the ML system how to manipulate the data it sends to the auditor, since some of the labels come from elsewhere. 

The internal security of the roles is beyond FaaS. The ML system itself needs to consider extra measures to protect its data and algorithms. We assume the ML system does present the data and predictions honestly. This is a reasonable assumption since the incentives to perform \emph{ethically} is in contrast to being dishonest when participating in fairness auditing process. This is discussed more in the Discussion Section.

\subsection{Protocol Overview}
The main security protocol sequence is between the ML system and Fairness Auditing Service or \emph{auditor} in short form. 
Note that although we suggest three roles in our architecture, the communications are mainly between the above two roles, and any universal verifier can turn to the auditor service (which represents the fairness board), if they want to challenge the computations. 

The ML system is responsible for the implementation and execution of the ML algorithm. It has data as input and performs some prediction~(depending on the use case and purpose) that forms the output~(Fig.~\ref{fig:system}). The Fairness Auditor Service receives information from the ML system, evaluates its fairness performance by computing a fairness metric. Then, it returns the result for the metric back to the ML system. It also publishes the calculations in a \emph{fairness board} for public verification. The public fairness board is a publicly accessible, read-only fairness board (e.g. a website). The auditor only has the right to append data~(and the sufficient proofs) to the fairness board. Also, the auditor verifies the authenticity, correctness and integrity of data before publishing it. 


\subsection{Protocol Sequence}
This protocol has three stages: setup, cryptogram generation and fairness metric computation. 

\begin{table}[t]
\centering
	\caption{\label{tbl:permutations} Possible permutations of 3-bit representation of an entry in the original data.}
\resizebox{0.6\textwidth}{!}{%
	\begin{tabular}{@{}c|c|c|c|c@{}}
		\toprule
		\begin{tabular}[c]{@{}c@{}}Membership\\ of Sensitive Group\end{tabular} & \begin{tabular}[c]{@{}c@{}}Actual\\  Label\end{tabular} & \begin{tabular}[c]{@{}c@{}}Predicted\\ Label\end{tabular} & \begin{tabular}[c]{@{}c@{}}Encoded \\ Permutation\end{tabular} & \begin{tabular}[c]{@{}c@{}}Permutation\\  \#\end{tabular} \\ \midrule
		No                             & 0                              & 0                              & 000                            & \#1                            \\
		No                             & 0                              & 1                              & 001                            & \#2                            \\
		No                             & 1                              & 0                              & 010                            & \#3                            \\
		No                             & 1                              & 1                              & 011                            & \#4                            \\
		Yes                            & 0                              & 0                              & 100                            & \#5                            \\
		Yes                            & 0                              & 1                              & 101                            & \#6                            \\
		Yes                            & 1                              & 0                              & 110                            & \#7                            \\
		Yes                            & 1                              & 1                              & 111                            & \#8                            \\ \bottomrule
	\end{tabular}
}
\end{table}
\normalfont

\subsubsection{Phase I: Setup}
\label{sss:setup}
In this phase, the ML System and Auditor agree on the initial settings. We assume the protocol functions in multiplicative cyclic group setting~(i.e. Digital Signature Algorithm~(DSA)--like group~\cite{katz2014introduction}), but it can also function in additive cyclic groups~(i.e.~Elliptic Curve Digital Signature Algorithm (ECDSA)--like groups~\cite{katz2014introduction}).
The auditor and ML system publicly agree on $(p,q,g)$ before the start of the protocol. Let $p$ and $q$ be two large primes where $q|(p-1)$. In a multiplicative cyclic group~($\mathbb{Z}_{p}^{*}$), $G_q$ is a subgroup of prime order $q$ and $g$ is its generator. For simplicity, we assume the Decision Diffie--Hellman~(DDH) problem is out of scope~\cite{stinson2018cryptography}.

Next, the ML system generates a public/private pair key by using DSA or ECDSA and publishes the public keys in the fairness board. The protection the private key pair depends on the security architecture of the ML system and we assume the private key is securely stored in an industrial standard practice~(e.g. using the secure memory module on board).

{\bf Cryptogram Table:\ } After initial agreements, the ML system produces a cryptogram table with $n$ rows corresponding to the number of samples in their test dataset. We will refer to this table as \emph{cryptogram table} in the rest of this paper. In case the ML system does not want to reveal the number of the samples in the test set, the auditor and the ML system can publicly agree on $n$. In this case, $n$ must be big enough so that the universal verifiers are satisfied with the outcome. 

Each row in the cryptogram table summarises three parameters: (1) protected group membership status, (2) its actual label and (3) predicted label by the ML model. Each row contains the encrypted format of the three parameters along with proofs of its correctness. 
A cryptogram table in the setup phase is shown in Table~\ref{tbl:cryptogram}. 
In the simplest case, each parameter is binary. Therefore, the combined parameters will generate eight permutations in total. In the setup phase, the table is generated to contain all eight possible permutations and their proofs for each data sample. The total structure of the permutations are shown in Table~\ref{tbl:permutations}.
Each row will satisfy four properties: (a) one can easily verify if a single cryptogram is the encrypted version of one of the eight possible permutations, (b) while verifiable, if only one single cryptogram selected, one cannot exert which permutations the current cryptogram represents, (c) for each two cryptograms selected from a single row, anyone will be able to distinguish each from one another, and (d) given a set of cryptograms arbitrarily select from each row as a set, one can easily check how many cases for each ``permutation'' are in the set. 

\begin{table*}[t]
\centering
\footnotesize
	\caption{\label{tbl:cryptogram}Cryptogram Table for $n$ data samples}
 \resizebox{0.9\textwidth}{!}{%
	\begin{tabular}{@{}|c|c|c|c|c|c|c|@{}}
		\toprule
		\begin{tabular}[c]{@{}c@{}}Sample\\  No\end{tabular} & \begin{tabular}[c]{@{}c@{}}Random\\ Public Key\end{tabular} & \begin{tabular}[c]{@{}c@{}}Reconstructed\\ Public Key\end{tabular} & \begin{tabular}[c]{@{}c@{}}Cryptogram \\ of Permutation \#1\end{tabular} & \begin{tabular}[c]{@{}c@{}}Cryptogram \\ of Permutation \#2\end{tabular} & ... & \begin{tabular}[c]{@{}c@{}}Cryptogram \\ of Permutation \#8\end{tabular} \\ \midrule
		1                              & $g^{x_1}$                      & $g^{y_1}$                      & \begin{tabular}[c]{@{}c@{}}$g^{x_1.y_1}.g$, \\ 1-of-8 ZKP\end{tabular} & \begin{tabular}[c]{@{}c@{}}$g^{x_1.y_1}.g^{2^m}$,\\ 1-of-8 ZKP\end{tabular} & ... & \begin{tabular}[c]{@{}c@{}}$g^{x_1.y_1}.g^{2^{7.m}}$, \\ 1-of-8 ZKP\end{tabular} \\ \midrule
		2                              & $g^{x_2}$                      & $g^{y_2}$                      & \begin{tabular}[c]{@{}c@{}}$g^{x_2.y_2}.g$, \\ 1-of-8 ZKP\end{tabular} & \begin{tabular}[c]{@{}c@{}}$g^{x_2.y_2}.g^{2^m}$,\\ 1-of-8 ZKP\end{tabular} & ... & \begin{tabular}[c]{@{}c@{}}$g^{x_2.y_2}.g^{2^{7.m}}$, \\ 1-of-8 ZKP\end{tabular} \\ \midrule
		...                            & ...                            & ...                            & ...                            & ...                            & ... & ...                            \\ \midrule
		n                              & $g^{x_n}$                      & $g^{y_n}$                      & \begin{tabular}[c]{@{}c@{}}$g^{x_n.y_n}.g$, \\ 1-of-8 ZKP\end{tabular} & \begin{tabular}[c]{@{}c@{}}$g^{x_n.y_n}.g^{2^m}$, \\ 1-of-8 ZKP\end{tabular} & ... & \begin{tabular}[c]{@{}c@{}}$g^{x_n.y_n}.g^{2^{7.m}}$,\\ 1-of-8 ZKP\end{tabular} \\ \bottomrule
	\end{tabular}
 }
\end{table*}
\normalfont
The generation of the cryptogram table functions are based on the following sequence:

Step (1): For each of the $n$ samples, the system generates a random public key $g^{x_i}$ where $x_i$ is the private key and $x_i \in \left [  1,q-1 \right ]$. 

Step (2): Once computation of public keys is finished for all samples, the system will compute another number $g^{y_i}$ where computed using Equation below. We refer to as \emph{reconstructed public key} as it is computed using a combination of public keys of all the rows, except for the current one.$g^{y_i} =  \frac{\prod_{j = 1}^{i-1} g^{x_j}}{\prod_{j = i + 1}^{n} g^{x_j}}$.

Step (3): At this step, the ML system computes the cryptograms and zero knowledge proofs for all the possible parameter permutations. This step occurs before the ML system is trained and deployed to predict data samples. Therefore, it considers all the permutation for minimising the overhead in the next protocol sequence stages (as we discuss later).

{\bf Cryptograms: } Each permutation is encoded into a $C_{i}=g^{x_{i}.y_{i}}.g^{p_{i}}$ which are computed based on the multi-option voting schemes introduced in \cite{baudron2001practical} and applied in \cite{hao2014every,hao2010anonymous}. In their method, $p_i$ is computed based on the $n$~(number of samples which already have been publicly agreed) and $m$ as the smallest integer such that $2^m > n$. For each of the eight permutations, the $p_i$ is computed using the following equation:

\begin{equation}
	\label{eq:encoding}
	p_{i} = \left\{\begin{matrix}
		2^0     & for~permutation~\# 1 \\
		2^m     & for~permutation~\# 2 \\
		\cdots  & \cdots               \\
		2^{7.m} & for~permutation~\# 8
	\end{matrix}\right.
\end{equation}

{\bf Zero Knowledge Proofs: } In addition to cryptograms, the ML system also generates 1--out--of--8 ZKP for each of the permutations. This proof ensure the values presented as $C_i$ in the cryptogram table is indeed the production of $g^{x_i.y_i}$ and $g^{p_i}$ where $p_i \in \left \{ 2^0, 2^m, \cdots , 2^{7.m} \right \}$. As shown in Table~\ref{tbl:cryptogram}, each of the computed columns for permutation contains a ZKP to guarantee it is one of the \emph{valid} values for evaluating the fairness metric in next stages. We use the widely used 1--out--of--n interactive ZKP technique \cite{cramer1994proofs}, where $n=8$ in our protocol. Moreover, by application of Fiat--Shamir heuristics~\cite{fiat1986prove}, this ZKP can be converted into non--interactive which makes the verification of proofs simpler~\cite{hao2010anonymous}.

\subsubsection{Phase II: Parameter Assignment}
\label{sss:table}
This stage starts when the ML system's training and testing. The output of this stage is a table with $n$ rows, each containing a cryptogram of the encoded permutation parameters with the required ZKPs, public key~($g^{x_i}$) and reconstructed key~($g^{y_i}$). The outcome of this stage is the final variant of the cryptogram table which we will call~\emph{fairness auditing table}.

{\bf Fairness Auditing Table: } This is derived from the previously computed \emph{cryptogram table}. This table combines the outcome of the ML model~(as shown in encoding format) with the cipher-text created in Phase I and form a ciphered version of the test dataset with $n$ samples. This table is generated based on the following steps:

Step (1): First, the ML system and fairness service properly authenticate each other to ensure they are communicating to the intended party. The ML system determines the permutation combination based on the three items parameters explained before. For that, ML system generates binary encoding for each of the data samples in the test dataset (i.e. the sensitive group membership, actual label and the predicted labels respectively as explained in Table~\ref{tbl:cryptogram}).

Step (2): The ML system generates ZKP for the knowledge of the encoding as commitment to its choice~($p_i$ as in Equation~\ref{eq:encoding}). The ZKP for the proof of knowledge can be converted to non--interactive using Fiat--Shamir heuristic~\cite{fiat1986prove}.

Step (3): The corresponding column number that equals the decimal value of the binary encoding is selected from the cryptogram table to complete the fairness auditing table(~as shown in Table~\ref{tbl:permutations}).

Finally, the generated fairness auditing table is digitally signed by the ML system and then is sent over the Fairness auditing service.

\subsubsection{Phase III: Fairness Evaluation}
\label{sss:fairnesscomputation}
First, the fairness auditing service receives the fairness auditing table, verifies the digital signature and the ZKPs, and publishes the contents in the fairness board. 



Then, it starts the process of computing the fairness metric. For this, the auditor service multiplies all the cryptograms~($C_{i}$) received in the cryptogram table together. Therefore, we have $\prod_{i}{C_i} = \prod_{i}{g^{x_i.y_i}.g^{p_i}}$. At this stage, the key point is the consideration of the effect $y_i$ and $x_i$ have on each other; know as ``Cancellation Formula''~(Lemma~\ref{lema:cancellation} and~\cite{hao2010anonymous,hao2014every,fengVehicle}).

\begin{lemma}
	\label{lema:cancellation}
	\emph{Cancellation Formula: } for $x_i$ and $y_i$, $\sum_{i}{x_i.y_i} = 0$
\end{lemma}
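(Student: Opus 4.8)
The plan is to reduce the statement to an elementary double-sum identity by first making $y_i$ explicit. Taking discrete logarithms (base $g$) of the defining relation for the reconstructed public key, $g^{y_i} = \prod_{j<i} g^{x_j} \big/ \prod_{j>i} g^{x_j}$, yields the exponent identity $y_i = \sum_{j<i} x_j - \sum_{j>i} x_j$ (read modulo $q$). Substituting this into the target sum gives
$$\sum_{i=1}^n x_i y_i = \sum_{i=1}^n x_i \Big( \sum_{j<i} x_j - \sum_{j>i} x_j \Big) = \sum_{i=1}^n \sum_{j<i} x_i x_j - \sum_{i=1}^n \sum_{j>i} x_i x_j,$$
so the entire claim reduces to showing that these two double sums coincide.

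Next I would show each double sum equals $\sum_{a<b} x_a x_b$, the sum of $x_a x_b$ over all unordered pairs of distinct indices. The cleanest way is to exhibit the bijection $(i,j) \mapsto (j,i)$ between the index sets $\{(i,j) : j<i\}$ and $\{(i,j) : j>i\}$, and note that commutativity $x_i x_j = x_j x_i$ makes the summand invariant under this swap. Hence the two double sums are term-by-term equal after relabelling, their difference vanishes, and $\sum_i x_i y_i = 0$. Since the identity holds over $\mathbb{Z}$, it holds a fortiori modulo $q$.

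The main thing to get right is the index bookkeeping — verifying that the relabelling is genuinely a bijection that preserves each summand — rather than any substantively hard step; the result is just a combinatorial cancellation identity. Finally I would record why this is the property we actually need downstream: it guarantees that $\prod_i C_i = g^{\sum_i x_i y_i}\, g^{\sum_i p_i} = g^{\sum_i p_i}$, so that all the blinding factors $g^{x_i y_i}$ cancel and the product of cryptograms collapses to the aggregate $g^{\sum_i p_i}$, from which the permutation counts used for the fairness metric can be recovered.
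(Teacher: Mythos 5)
Your proof is correct and follows essentially the same route as the paper: substitute $y_i = \sum_{j<i} x_j - \sum_{j>i} x_j$ from the reconstructed-key definition, expand into two triangular double sums, and show they coincide by relabelling indices (the paper does this by swapping the order of summation, which is exactly your bijection $(i,j)\mapsto(j,i)$ plus commutativity). Your two additions — noting the identity holds over $\mathbb{Z}$ hence modulo $q$, and recording the downstream consequence $\prod_i C_i = g^{\sum_i p_i}$ — are tidier than the paper's presentation but do not constitute a different argument.
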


\begin{proof}
	From reconstructed keys equation, one can deduce $y_i$ is as $\sum_{i} = \sum_{j < i}{x_j} - \sum_{j > i}{x_j}$, hence:
	\begin{equation}
		\begin{split}
			\sum_{i}{x_i.y_i} & = \sum_{i=1}^{i=n}{x_i. ( \sum_{j = 1}^{j= i-1}{x_j} - \sum_{j = i+1}^{j = n}{x_j} )}
			\\
			& = \sum_{i = 1}^{i=n}{\sum_{j = 1}^{j= i-1}{x_i.x_j}} - \sum_{i}^{i=n}{\sum_{j = i+1}^{j = n}{x_i.x_j}}
			\\
			& = \sum_{j = 1}^{j=n}{\sum_{i=j+1}^{i= n}{x_i.x_j}} - \sum_{i}^{i=n}{\sum_{j = i+1}^{j = n}{x_i.x_j}}
			\\
			& = \sum_{i = 1}^{i=n}{(\sum_{j=1}^{j= i-1}{x_j}} - \sum_{j=i+1}^{j=n}{\sum_{j = i+1}^{j = n}{x_j})}x_{i}
			\\
			& = 0
		\end{split}
	\end{equation}
	At this point, we expand each of these equation components to compare them together.
\end{proof}



Considering the Cancellation Formula, we can conclude multiplication of all cryptograms into $\prod_{i}{C_i} = \prod_{i}{g^{x_i.y_i}.g^{p_i}} = \prod_{i}{g^{p_i}} = g^{\sum_{i}{p_i}}$. The result is total sum of permutations ($p\#1$ to $p\#8$) as $\sum_{i}{p_i} = a.2^0 + b.2^m + c.2^{2m} + d.2^{3m} + e.2^{4m} + f.2^{5m} + g.2^{6m} + h.2^{7m} $ where $a,b,c,d,e,f,g,h$ are the number of each permutation respectively (Permutation~\#1, Permutation~\#2, $\dots$, Permutation~\#8). 
The search space for such combination depends on the number of samples sent from the ML system to the auditor~(the size of the test set is $n$ for 8 permutations is $\bigl(\begin{smallmatrix}n + 8 -1\\8 - 1\end{smallmatrix}\bigr)$~\cite{hao2010anonymous}). 
As described in Phase I, the size of $n$~(the total number of samples) can be agreed with consideration of the computational capacity of the auditor service. In the simplest setting where $n$ is small, the auditor will determine the overall number of permutations~(as in $\sum{p_i}$, where $i \in \{ 1,2,\cdots,8 \}$) by performing an exhaustive search in all possible combinations until it finds the correct one. 

This process is computationally heavy especially when the number of data samples in the fairness auditing table is large. In this case, the fairness auditor can delegate the declaration of the permutation number to the ML system. The auditor still receives the fairness auditing table and the relevant ZKPs. It can store the fairness auditing table to the fairness board, compute the fairness, and verify the correctness of the declared permutation numbers. The universal verifier can follow the same steps to verify the fairness metric computations through the fairness auditing table that is publicly accessible via fairness board.



At the end of this stage, the auditor uses the acquired numbers to compute the fairness metric and release the information publicly. The number of each permutation denotes the overall performance of the ML algorithm for each of the groups with protected attribute. Table~\ref{tbl:permutation} demonstrates the permutations and how it relates to the fairness metric of the ML system. The cryptogram table and the results will be published on the fairness board  (Fig. \ref{fig:system}).

\section{Implementation and Performance Analysis}
\label{s:evaluation}

\subsection{Proof-of-Concept Implementation}
\textbf{Tools and Platform:} The back--end is implemented in Python v3.7.1 and the front--end is implemented with Node.js v10.15.3. In our evaluations, the computations required for generation of the cryptogram table~(in the ML system) is developed with Python. The elliptic curve operations make use of the Python package \emph{tinyec} and the conversion of Python classes to a JSON compatible format uses the Python package~\emph{JSONpickle}. All the experiments are conducted on a MacBook pro laptop with the following configurations: CPU 2.7 GHz Quad-Core Intel Core i7 with 16 GB Memory running MacOS Catalina v.10.15.5 for the Operating System.

\begin{table}[t]
\footnotesize
\caption{The required permutations to compute the fairness metrics of an ML system}
		\label{tbl:permutation}
\centering
\resizebox{0.5\textwidth}{!}{%
	\begin{tabular}{@{}lcc@{}}
		\toprule
		\multicolumn{1}{c}{\begin{tabular}[c]{@{}c@{}}Fairness\\ Component\end{tabular}} & \begin{tabular}[c]{@{}c@{}}Corresponding\\ Permutation \#\end{tabular} & Computation     \\ \midrule
		$ Pr(\hat{Y} \mid A=0)$                            & \#2 , \#4                      & $(\#2 + \#4)/n$ \\
		$ Pr(\hat{Y} \mid A=1)$                            & \#6 , \#8                      & $(\#6 + \#8)/n$ \\
		$ Pr(\hat{Y} \mid A=0, y=0)$                       & \#2                            & $\#2/n$         \\
		$ Pr(\hat{Y} \mid A=1, y=0)$                       & \#6                            & $\#6/n$         \\
		$ Pr(\hat{Y} \mid A=0, y=1)$                       & \#4                            & $\#4/n$         \\
		$ Pr(\hat{Y} \mid A=1, y=1)$                       & \#8                            & $(\#8)/n$       \\ \bottomrule
	\end{tabular}
}
\end{table}
\normalfont

\textbf{Case-Study Dataset:} We use a publicly available dataset from Medical Expenditure Panel Survey~(MEPS) \cite{MEPS} that contains 15830 data points about the healthcare utilization of individuals. We developed a model (Logistic Regression) that determines whether a specific patient requires health services, such as additional care. This ML system assigns a score to each patient. If the score is above a preset threshold, then the patient requires extra health services. In the MEPS dataset, the protected attribute is ``race''. A fair system provides such services fairly independent of the patient's race. Here, the privileged race group in this dataset is ``white ethnicity''. We have used 50\% of the dataset as training, 30\% as validation and the remaining 20\% as test dataset. 
We set the number of cryptogram table samples to equal the size of test set~($N=3166$). In this example we include three attributes in the cryptogram to represent the binary values of $A$, $Y$ and $\hat{Y}$ (section \ref{ss:fairness}), thus leading to 8 permutations for each data sample.  

In our experiment, where $N=3166$, the total size of the search space is $\bigl(\begin{smallmatrix}3166 + 8 -1\\8 - 1\end{smallmatrix}\bigr) \approx 2^{69}$. The exhaustive search approach is computationally expensive for our experimental hardware configurations, so we decided to use the approach suggested in Section~\ref{sss:fairnesscomputation}. Here, the permutation numbers are declared by the ML system and the auditor service verified the claims by comparing the computations done by the auditor (as in $\prod_{i}{C_i} = \prod_{i}{g^{x_i.y_i}.g^{p_i}} = \prod_{i}{g^{p_i}} = g^{\sum_{i}{p_i}}$) with the total sum of the received permutations ($p\#1$ to $p\#8$) as $\sum_{i}{p_i} = a.2^0 + b.2^m + c.2^{2m} + d.2^{3m} + e.2^{4m} + f.2^{5m} + g.2^{6m} + h.2^{7m}$. This is a reasonable approach since we assumed that the ML system will not attempt to deceive the auditor for its outcome~(section~\ref{ss:threat}).

\subsection{Performance}
\label{ss:time}
This section presents the execution time per data point for each of the main computational tasks, in each protocol stage. Recall that phase I was executed before the ML system's training and testing. This stage can be developed~(and stored separately) in parallel to the implementation of the model in order to mitigate the performance challenge of Phase I. In our implementation, the output of this stage~(cryptogram table) is stored in a  separate file in JSON format and can be retrieved at the beginning of the phase II. 

Phase II begins after the ML model is trained, tested, and validated. This stage uses the output of the ML model to generate the fairness auditing table from the cryptogram table as well as ZKP for knowledge of the permutation. The output of this phase is transmitted to the Fairness Auditor Service in JSON format for phase III. At this stage, first the ZKPs are verified and then, the summation of the cryptograms determines the number of permutations for each of the sensitive groups. Once the auditing service has these numbers, it can compute the fairness of the ML system.

In our evaluations~(where $N=3166$), public/private key pair generation completes in 60 milliseconds~(ms) on average with standard deviation of 6ms. The execution time for ZKP of private key was roughly the same~(60ms on average with standard deviation of 6ms). The generation of reconstructed public key took around 450ms with standard deviation of 8ms. The most computationally expensive stage in phase I was the 1--out--of--8 ZKP for each of the permutations. This stage took longer than the other ones because first, the algorithm is more complicated and second, it should be repeated 8 times~(for each of the permutations separately) for every row in cryptogram table. The computation of 1--out--of--8 ZKPs takes 1.7 seconds for each data sample with STD of 0.1 seconds. Overall, phase I took around 14 seconds with STD of 1 second for each data sample in the test set. In our experiments~(where $N=3166$ samples), the total execution of phase I took roughly 12 hours and 54 minutes.

Phase II consists of creation of the auditing table and generation of the ZKP for knowledge of the permutation. The fairness auditing table is derived from the cryptogram table (as it is mapping the encoding to the corresponding permutation number in the cryptogram table). The elapsed time for such derivation is negligible~(total: 1ms). The generation of ZKP for knowledge of the permutation executed less than 60ms on average with standard deviation of 3ms for each data sample. The completion of both stages took less than 3 minutes. The fairness auditing table is sent to the Fairness Auditor Service for Phase III.

The verification of ZKPs in the last phase~(Phase III) is a computationally expensive operation. The ZKP for the ownership of the private key took around 260ms on average with standard deviation of 2ms. The verification of 1--out--of--8 ZKP for each data point roughly took 2.5 seconds on average with 20ms standard deviation. The verification of the ZKP for knowledge of permutation executed in 100ms with standard deviation of 5ms. The summation of the cryptograms after verification took 450ms overall for $N=3166$ items. 
In our experiment, completion of the stages in phase III took around 2 hours and 30 minutes in total.

In summary, the experimental setup for our architecture, where we computed the required cryptograms and ZKPs for $N=3166$ data points in a real--world dataset, overall time was around 15 hours on the laptop specification given earlier. The main part of the time is consumed by the computation required for phase I~(12 hours and 54 minutes). However, as we noted before, Phase I can be executed before the ML model setup and is stored in a separate JSON file and will be loaded at the beginning of stage II~(after the training and validation of the ML model is complete). The other main computational effort, which can only be done after the ML system's outcomes have been obtained, is in Phase III.  For our example, actual computation of fairness takes two and a half hours. In summary, the creation and handling of cryptograms takes considerable computational effort for realistic datasets and for the fairness metrics that require three attributes.  In what follows we analyse how performance scales with respect to the number of data points as well as with the number of attributes represented in the cryptograms.  

\section{Conclusion}
\label{s:conclusion}
This paper proposes Fairness as a Service (FaaS), a trustworthy service architecture and secure protocol for the calculation of algorithmic fairness. 
FaaS is designed as a service that calculates fairness without asking the ML system to share the original dataset or model information. Instead, it requires an encrypted representation of the values of the data features 
delivered by the ML system in the shape of cryptograms. We used non-interactive Zero Knowledge Proofs within the cryptogram to assure that the protocol is executed as it should.
These cryptograms are posted on a public fairness board for everyone to inspect the correctness of the computations for the fairness of the ML system. This is a new approach in privacy--preserving computation of fairness since unlike other similar proposals that use federated learning approach, our FaaS architecture does not rely on a specific machine learning model or a fairness metric definition for its operation. Instead, one have the freedom of deploying their desired model and the fairness metric of choice.

In this paper we proved that the security protocol guarantees the privacy of data and does not leak any model information. Compared to earlier designs, trust in our design is in the correct construction of the cryptogram by the ML system. Arguably, this is more realistic as a solution than providing full access to data to the trusted third party, taking into account the many legal, business and ethical requirements of ML systems. At the same time, this provides a new challenge in increasing the trust one has in the ML system. 
Increasing trust in the construction of the cryptograms remains an interesting research challenge following from the presented protocol. 

We implemented a proof-of-concept of FaaS and conducted performance experiments on commodity hardware. The protocol takes seconds per data point to complete, thus demonstrating in performance challenges if the number of data points is large (tens of thousands). To mitigate the performance challenge, the security protocol is staged such that the construction of the cryptogram can be done off-line. The performance of the calculation of fairness from the cryptogram is a challenge to address in future work. All together, we believe FaaS and the presented underlying security protocol provide a new and promising approach to calculating and verifying fairness of AI algorithms.  

\section*{Acknowledgement}

The authors in this project have been funded by UK EPSRC grant ``FinTrust: Trust Engineering for the Financial Industry'' under grant number EP/R033595/1, and UK EPSRC grant ``AGENCY: Assuring Citizen Agency in a World with Complex Online Harms'' under grant EP/W032481/1 and PETRAS National Centre of Excellence for IoT Systems Cybersecurity, which has been funded by the UK EPSRC under grant number EP/S035362/1.

\bibliographystyle{plain}
\bibliography{references}
\end{document}